\documentclass{LMCS}


\usepackage{amssymb,latexsym,amsmath,graphicx,hyperref}

\theoremstyle{plain}

\newtheorem{theorem}[thm]{Theorem}
\renewcommand{\(}{\begin{eqnarray*}}
\renewcommand{\)}{\end{eqnarray*}}
\newcommand\cl{\mathsf{cl}}

\newcommand\len[1]{|\kern1pt#1\kern1pt|}
\newcommand\llen[1]{\parallel\kern-1pt#1\kern-1pt\parallel}
\newcommand\subs{\mathrel{\subseteq}}
\newcommand\Imp{\mathrel{\Rightarrow}}
\newcommand\imp{\mathrel{\rightarrow}}
\newlength\reasonwidth
\setlength\reasonwidth{3cm}
\newcommand\reasoning[1]{\def\longest{#1}\settowidth{\reasonwidth}{$\displaystyle\longest$}\addtolength{\reasonwidth}{5mm}}
\newcommand\reason[2]{\makebox[\reasonwidth][l]{$\displaystyle{#1}$}\mbox{#2}}
\renewcommand\star{^{\textstyle *}}
\newcommand\complexes{{\mathbb C}}
\newcommand\rationals{{\mathbb Q}}
\newcommand\naturals{{\mathbb N}}
\newcommand\eps{\varepsilon}
\newcommand\bx[1]{\mbox{\rm\texttt[}#1\mbox{\rm\texttt]}}
\newcommand\romanize{\def\theenumi{\roman{enumi}}\def\labelenumi{{\rm(\theenumi)}}}
\newcommand\set[2]{\{#1 \mid #2\}}
\newcommand\setcompl[1]{\mbox{$\sim$}\kern1pt{#1}}
\newcommand\meet\wedge
\renewcommand\phi\varphi
\newcounter{linenum}
\newcommand\linenum{\hspace{20pt}\makebox[0pt][l]{\tiny\textsc{\arabic{linenum}}}\hspace{16pt}\addtocounter{linenum}1}
\newlength\indlength
\indlength=16pt
\newcommand\ind[1]{\hspace{#1\indlength}}
\font\tencmtt=cmtt10
\catcode`\<=1\catcode`\>=2\catcode`\{=12\catcode`\}=12\relax
\def\lbr<<\tencmtt {>>
\def\rbr<<\tencmtt }>>
\catcode`\<=12\catcode`\>=12\catcode`\{=1\catcode`\}=2\relax
\newenvironment{algorithm}{\begin{small}\begin{flushleft}\setcounter{linenum}1\spaceskip=4pt\baselineskip=13pt}{\end{flushleft}\end{small}}

\def\doi{3 (4:8) 2007}
\lmcsheading%
{\doi}
{1--14}
{}
{}
{May.~\phantom{0}2, 2007}
{Nov.~\phantom{0}9, 2007}
{}
\begin{document}

\title[Coinductive Proof Principles for Stochastic Processes]{Coinductive Proof Principles for Stochastic Processes\rsuper*}

\author[D.~Kozen]{Dexter Kozen}	
\address{Department of Computer Science\\
Cornell University\\
Ithaca, New York 14853-7501, USA}	
\email{kozen@cs.cornell.edu}  





\keywords{coinduction, coalgebra, logic in computer science,
  probabilistic logic, fractal}

\subjclass{F.4.1, F.3.1, I.1.3, I.2.3}

\titlecomment{{\lsuper*}A preliminary version of this paper appeared as \cite{K06b}.}


\begin{abstract}
  \noindent We give an explicit coinduction principle for recursively-defined stochastic processes.  The principle applies to any closed property, not just equality, and works even when solutions are not unique.  The rule encapsulates low-level analytic arguments, allowing reasoning about such processes at a higher algebraic level.  We illustrate the use of the rule in deriving properties of a simple coin-flip process.
\end{abstract}

\maketitle

\section{Introduction}
\label{sec:intro}

Coinduction has been shown to be a useful tool in functional programming.  Streams, automata, concurrent and stochastic processes, and recursive types have been successfully analyzed using coinductive methods; see \cite{BarwiseMoss96,GupJagPan04,DesGupJagPan02,Rutten03,Gordon94} and references therein.

Most approaches emphasize the relationship between coinduction and bisimulation.  In Rutten's treatment \cite{Rutten03} (see also \cite{Gordon94,BarwiseMoss96}), the coinduction principle states that under certain conditions, two bisimilar processes must be equal.  For example, to prove the equality of infinite streams $\sigma=\texttt{merge}(\texttt{split}(\sigma))$, where \texttt{merge} and \texttt{split} satisfy the familiar coinductive definitions
\(
\mathtt{merge}(a::\sigma,\tau) &=& a::\mathtt{merge}(\tau,\sigma)\\
\mathtt{\#1}(\mathtt{split}(a::b::\rho)) &=& a::\mathtt{\#1}(\mathtt{split}(\rho))\\
\mathtt{\#2}(\mathtt{split}(a::b::\rho)) &=& b::\mathtt{\#2}(\mathtt{split}(\rho)),
\)
it suffices to show that the two streams are bisimilar.  An alternative view is that certain systems of recursive equations over a certain algebraic structure have unique solutions.  Desharnais et al.~\cite{DesGupJagPan02,GupJagPan04} study bisimulation in a probabilistic context.  They are primarily interested in the approximation of one process with another.  Again, they focus on bisimulation, but do not formulate an explicit coinduction rule.  

In this paper we introduce a generalization of the coinduction principle that applies to other properties besides equations and to situations in which solutions are not unique.  We illustrate its use with an extended example that demonstrates how the rule encapsulates low-level analytic arguments involving convergent sequences in its proof of soundness, thereby allowing reasoning about such processes at a higher algebraic level.

\section{An Example}
\label{sec:example}

Consider the following procedure for simulating a coin of arbitrary real bias $q$, $0\leq q\leq 1$, with a coin of arbitrary real bias $p$, $0 < p \leq 1/2$.  We assume unit-time exact arithmetic on real numbers.
\begin{algorithm}
\linenum\texttt{boolean} \textsc{Qflip}($q$) \lbr\\
\linenum\ind 1\texttt{if} ($q > p$) \lbr\\
\linenum\ind 2\texttt{if} (\textsc{Pflip}()) \texttt{return true};\\
\linenum\ind 2\texttt{else return} \textsc{Qflip}($(q-p)/(1-p)$);\\
\linenum\ind 1\rbr\ \texttt{else} \lbr\\
\linenum\ind 2\texttt{if} (\textsc{Pflip}()) \texttt{return} \textsc{Qflip}($q/p$);\\
\linenum\ind 2\texttt{else return false};\\
\linenum\ind 1\rbr\\
\linenum\rbr
\end{algorithm}
Intuitively, if $q > p$ and the bias-$p$ coin flip returns heads (\texttt{true}), which occurs with probability $p$, then we halt and output heads; this gives a fraction $p/q$ of the desired probability $q$ of heads of the simulated bias-$q$ coin.  If the bias-$p$ coin returns tails, which occurs with probability $1-p$, we rescale the problem appropriately and call \textsc{Qflip} tail-recursively.  Similarly, if $q\leq p$ and the bias-$p$ coin returns tails, then we halt and output tails; and if not, we rescale appropriately and call \textsc{Qflip} tail-recursively.

On any input $0\leq q\leq 1$, the probability of halting is 1, since the procedure halts with probability at least $p$ in each iteration.  The probability that \textsc{Qflip} halts and returns heads on input $q$ exists and satisfies the recurrence
\begin{eqnarray}
H(q) &=& \begin{cases}
p\cdot H(\frac qp), & \text{if } q\leq p,\\
p\ +\ (1-p)\cdot H(\frac{q-p}{1-p}), & \text{if } q > p.
\end{cases}
\label{eqn:prob0}
\end{eqnarray}
Now $H\star(q)=q$ is a solution to this recurrence, as can be seen by direct substitution.  There are uncountably many other solutions as well, but these are all unbounded (see Section \ref{sec:unbounded}).  Since $H\star$ is the unique bounded solution, it must give the probability of heads.

We can do the same for the expected running time.  Let us measure the expected number of calls to \textsc{Pflip} on input $q$.  The expectation exists and is uniformly bounded on the unit interval by $1/p$, the expected running time of a Bernoulli (coin-flip) process with success probability $p$.  From the program, we obtain the recurrence
\(
E_0(q)
&=& \begin{cases}
(1-p)\cdot 1\ +\ p\cdot(1+E_0(\frac qp)), & \text{if } q\leq p,\\
p\cdot 1\ +\ (1-p)\cdot (1 + E_0(\frac{q-p}{1-p})), & \text{if } q > p
\end{cases}\\[2pt]
&=& \begin{cases}
1\ +\ p\cdot E_0(\frac qp), & \text{if } q\leq p,\\
1\ +\ (1-p)\cdot E_0(\frac{q-p}{1-p}), & \text{if } q > p.
\end{cases}
\)
The unique bounded solution to this recurrence is
\begin{eqnarray}
E_0\star(q) &=& \frac qp\ +\ \frac{1-q}{1-p}.\label{eqn:exp1}
\end{eqnarray}
That it is a solution can be ascertained by direct substitution; uniqueness requires a further argument, which we will give later.  As before, there are uncountably many unbounded solutions, but since $E_0\star$ is the unique bounded solution, it must give the expected running time for any $q$.

The situation gets more interesting when we observe that slight modifications of the algorithm lead to noncontinuous fractal solutions with no simple characterizations like \eqref{eqn:exp1}.  The fractal behavior of stochastic processes has been previously observed in \cite{GupJagPan99}.

Currently, when $q > p$, we halt and output ``heads'' when \textsc{Pflip} gives heads, which occurs with probability $p$.  But note that we can save some time when $q\geq 1-p$.  In that case, we can halt and report heads if \textsc{Pflip} gives tails, which occurs with the larger probability $1-p$.  This allows us to take off a larger fraction of the remaining ``heads'' weight of the bias-$q$ coin.  If \textsc{Pflip} gives tails, we must still rescale, but the rescaling function is different.  The new code is in lines 2--4.
\begin{algorithm}
\linenum\texttt{boolean} \textsc{Qflip}($q$) \lbr\\
\linenum\ind 1\texttt{if} ($q \geq 1-p$) \lbr\\
\linenum\ind 2\texttt{if} (\textsc{Pflip}()) \texttt{return} \textsc{Qflip}($(q-(1-p))/p$);\\
\linenum\ind 2\texttt{else return true};\\
\linenum\ind 1\rbr\ \texttt{else if} ($q > p$) \lbr\\
\linenum\ind 2\texttt{if} (\textsc{Pflip}()) \texttt{return true};\\
\linenum\ind 2\texttt{else return} \textsc{Qflip}($(q-p)/(1-p)$);\\
\linenum\ind 1\rbr\ \texttt{else} \lbr\\
\linenum\ind 2\texttt{if} (\textsc{Pflip}()) \texttt{return} \textsc{Qflip}($q/p$);\\
\linenum\ind 2\texttt{else return false};\\
\linenum\ind 1\rbr\\
\linenum\rbr
\end{algorithm}
The recurrence for the expected running time is
\begin{eqnarray}
E_1(q) &=& 1 + r(q)E_1(f_1(q)),\label{eqn:fractal1}
\end{eqnarray}
where
\begin{align}
f_1(q)\ &=\ 
\begin{cases}
\frac qp, & \mbox{if $q\leq p$}\\[1ex]
\frac{q-p}{1-p}, & \mbox{if $p < q < 1-p$}\\[1ex]
\frac{q-(1-p)}p, & \mbox{if $q \geq 1-p$}
\end{cases}\label{eqn:f1def}\\[1ex]
r(q)\ &=\ 
\begin{cases}
1-p, & \mbox{if $p < q < 1-p$}\\
p, & \mbox{otherwise.}
\end{cases}\label{eqn:rdef}
\end{align}
Again, there is a unique bounded solution
\(
E_1\star(q) &=& \sum_{n=0}^\infty\prod_{j=0}^{n-1} r(f_1^j(q)),
\)
but there is no longer a nice algebraic characterization like \eqref{eqn:exp1}.  The solution for $p=1/4$ is the noncontinuous fractal shown in Fig.~\ref{fig:fractal1}, shown compared to the straight line $E_0\star$ running from 4/3 to 4.
\begin{figure}[ht]
\begin{center}
\includegraphics[width=240pt,height=150pt]{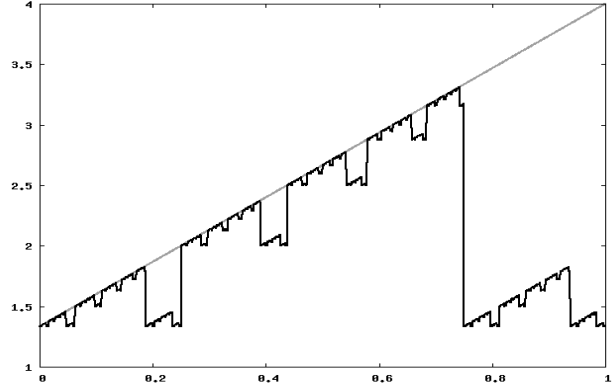}
\caption{Fractal solution of \eqref{eqn:fractal1}}
\label{fig:fractal1}
\end{center}
\end{figure}
The large discontinuity at $q = 1-p = 3/4$ is due to the modification of the algorithm for $q\geq 1-p$, and this discontinuity is propagated everywhere by the recurrence.

Fig.~\ref{fig:fractal1} and intuition dictate that $E_1\star\leq E_0\star$, but how do we prove this?  Not by induction, because there is no basis.  One might briefly imagine that it is because the second process halts no later than the first on any predetermined sequence of coin flips, but there are trivial counterexamples.  An analytic argument involving convergence of sequences seems inevitable.

However, there is a simpler alternative.  It will follow from our coinductive proof principle that to conclude $E_1\star\leq E_0\star$, it suffices to show that $\tau(E_1)(q)\leq E_0\star(q)$ whenever $E_1(f_1(q))\leq E_0\star(f_1(q))$, where $\tau$ is a suitably defined operator representing the unwinding of the recurrence \eqref{eqn:fractal1} once.  This property is easily checked algebraically, and no analysis is necessary.

We can modify the algorithm further to try to achieve more savings.  If $1/2 < q < 1-p$, it would seem to our advantage to remove $p$ from the tail probability of $q$ rather than from the head probability.  The intuition behind this heuristic is that that when $q$ is in one of the regions $\bx{0,p}$ or $\bx{1-p,1}$, we can halt in the next step with the higher probability $1-p$.  If $q > 1/2$, then the proposed new action will cause $q$ to move to the right toward the closer good region $\bx{1-p,1}$ instead of to the left, thereby getting to a good region faster.  The new code is in lines 5--7.
\begin{algorithm}
\linenum\texttt{boolean} \textsc{Qflip}($q$) \lbr\\
\linenum\ind 1\texttt{if} ($q \geq 1-p$) \lbr\\
\linenum\ind 2\texttt{if} (\textsc{Pflip}()) \texttt{return} \textsc{Qflip}($(q-(1-p))/p$);\\
\linenum\ind 2\texttt{else return true};\\
\linenum\ind 1\rbr\ \texttt{else if} ($q > 1/2$) \lbr\\
\linenum\ind 2\texttt{if} (\textsc{Pflip}()) \texttt{return false};\\
\linenum\ind 2\texttt{else return} \textsc{Qflip}($q/(1-p)$);\\
\linenum\ind 1\rbr\ \texttt{else if} ($q > p$) \lbr\\
\linenum\ind 2\texttt{if} (\textsc{Pflip}()) \texttt{return true};\\
\linenum\ind 2\texttt{else return} \textsc{Qflip}($(q-p)/(1-p)$);\\
\linenum\ind 1\rbr\ \texttt{else} \lbr\\
\linenum\ind 2\texttt{if} (\textsc{Pflip}()) \texttt{return} \textsc{Qflip}($q/p$);\\
\linenum\ind 2\texttt{else return false};\\
\linenum\ind 1\rbr\\
\linenum\rbr
\end{algorithm}
The recurrence is
\begin{eqnarray}
E_2(q) = 1 + r(q)E_2(f_2(q))\label{eqn:fractal2}
\end{eqnarray}
with
\(
f_2(q) &=&
\begin{cases}
\frac qp, & \text{if } q\leq p,\\[1ex]
\frac{q-p}{1-p}, & \text{if } p < q\leq 1/2,\\[1ex]
\frac q{1-p}, & \text{if } 1/2 < q < 1-p,\\[1ex]
\frac{q-(1-p)}p, & \text{if } q \geq 1-p,
\end{cases}
\)
and $r(q)$ as given in \eqref{eqn:rdef}.  The symmetric fractal solution
\(
E_2\star(q) &=& \sum_{n=0}^\infty\prod_{j=0}^{n-1} r(f_2^j(q))
\)
is shown in Fig.~\ref{fig:fractal2}.
\begin{figure}[ht]
\begin{center}
\includegraphics[width=240pt,height=150pt]{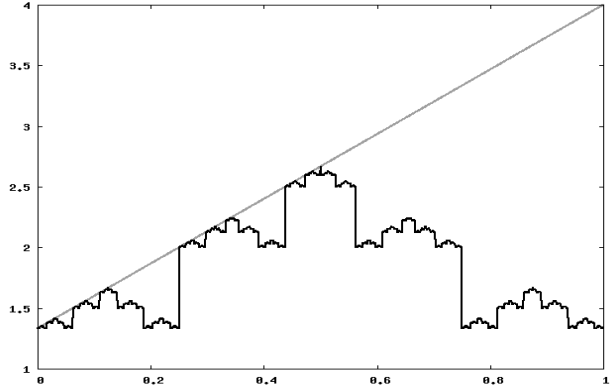}
\caption{Fractal solution of \eqref{eqn:fractal2}}
\label{fig:fractal2}
\end{center}
\end{figure}

Intuition seems to say that this solution should be at least as good as $E_1\star$, but it turns out that this is not always the case.  By unwinding the recurrences a few steps and using the lower bound
\(
E_2\star(q) &\geq& \sum_{n=0}^\infty p^n\ \ =\ \ \frac 1{1-p},
\)
it can be shown that for $p=1/4$,
\(
E_1\star(11/20) &=& 5/2\ \ =\ \ 2.5\\
E_2\star(11/20) &\geq& 323/128\ \ \approx\ \ 2.5234375\ldots~.
\)
Moreover, this inversion holds on an open interval containing 11/20 and countably many other open intervals.

One might ask whether there is a slight modification of $E_2\star$ that is everywhere better than $E_1\star$.  The answer is yes: take the breakpoint not at 1/2, but at
\(
c &=& \max((1-p)^2,1-(1-p)^2),
\)
provided $p\leq (1-p)^2$.  For $p=1/4$, this gives $c=9/16$.  Now the recurrence is
\begin{eqnarray}
E_3(q) &=& 1 + r(q)E_3(f_3(q))\label{eqn:fractal3}
\end{eqnarray}
with
\(
f_3(q) &=&
\begin{cases}
\frac qp, & \text{if } q\leq p,\\[1ex]
\frac{q-p}{1-p}, & \text{if } p < q\leq c,\\[1ex]
\frac q{1-p}, & \text{if } c < q < 1-p,\\[1ex]
\frac{q-(1-p)}p, & \text{if } q\geq 1-p.
\end{cases}
\)
(Since $p\leq (1 - p)^2$ implies $p\leq (3 - \sqrt 5)/2 \approx 0.382$, this modification will not work for all $p$.)

Now we wish to show that $E_3\star\leq E_1\star$ on the whole unit interval.  Note that we are comparing two nowhere-differentiable functions\footnote{Hermite and Poincar\'e eschewed such functions, calling them a ``dreadful plague''.  Poincar\'e wrote: ``Yesterday, if a new function was invented, it was to serve some purpose; today, they are invented only to debunk the arguments of our predecessors, and they will never have any other use.''}; we have no nice algebraic description of them save as solutions of the recurrences $E_i(q) = 1 + r(q)E_i(f_i(q))$.  However, we can prove the desired inequality purely algebraically using the coinductive principle below, without recourse to analysis.  We outline a proof below, after we have stated the principle and proved its validity.

\section{A Coinduction Principle}
\label{sec:coind}

In this section we state and prove a coinduction principle that will allow us to derive properties of stochastic processes.  The version we will use is most conveniently formulated in terms of bounded linear operators on a Banach space (complete normed linear space), but is closely related to a coinduction principle that holds in arbitrary complete metric spaces.  We treat the metric version first.

Let $(V,d)$ be a complete metric space.  A function $\tau:V\imp V$ is \emph{contractive} if there exists a $c < 1$ such that for all $u,v\in V$, $d(\tau(u),\tau(v)) \leq c\cdot d(u,v)$.  The value $c$ is called the \emph{constant of contraction}.  A continuous function $\tau$ is said to be \emph{eventually contractive} if $\tau^n$ is contractive for some $n\geq 1$.  Contractive maps are uniformly continuous, and by the Banach fixpoint theorem, any such map has a unique fixpoint in $V$.

The fixpoint of a contractive map $\tau$ can be constructed explicitly as the limit of a Cauchy sequence $u,\tau(u),\tau^2(u),\ldots$ starting at any point $u\in V$.  The sequence is Cauchy; one can show by elementary arguments that
\(
d(\tau^{n+m}(u),\tau^n(u)) &\leq& c^n(1-c^m)(1-c)^{-1}\cdot d(\tau(u),u).
\)
Since $V$ is complete, the sequence has a limit $u\star$, which by continuity must be a fixpoint of $\tau$.  Moreover, $u\star$ is unique: if $\tau(u)=u$ and $\tau(v)=v$, then
\(
d(u,v) = d(\tau(u),\tau(v)) \leq c\cdot d(u,v) &\Imp& d(u,v)=0,
\)
therefore $u=v$.

Eventually contractive maps also have unique fixpoints.  If $\tau^n$ is contractive, let $u\star$ be the unique fixpoint of $\tau^n$.  Then $\tau(u\star)$ is also a fixpoint of $\tau^n$.  But then $d(u\star,\tau(u\star)) = d(\tau^n(u\star),\tau^{n+1}(u\star)) \leq c\cdot d(u\star,\tau(u\star))$, therefore $u\star$ is also a fixpoint of $\tau$.

In this framework, the coinduction rule takes the following simple form.  If $\phi$ is a closed nonempty subset of a complete metric space $V$, and if $\tau$ is an eventually contractive map on $V$ that preserves $\phi$, then the unique fixpoint $u\star$ of $\tau$ is in $\phi$.  Expressed as a proof rule, this says for $\phi$ a closed property,
\begin{equation}
\frac{\exists u\ \phi(u) \qquad \forall u\ \phi(u) \Imp \phi(\tau(u))}{\phi(u\star)}\label{eqn:coind1}
\end{equation}
This is quite easily proved.  Since $\phi$ is nonempty, it contains a point $u$.  Since $\phi$ is preserved by $\tau$, all elements of the sequence $\tau^n(u)$ are contained in $\phi$.  Finally, since $\phi$ is closed, the fixpoint $u\star$ is contained in $\phi$, since it is the limit of a Cauchy subsequence.

For our purposes, the coinduction principle is most conveniently expressed in the following form.  This form makes clear how the principle allows analytic arguments to be replaced by simpler algebraic ones.  See \cite{DunfordSchwartz57} for the necessary background.

Let $B$ be a Banach space (complete normed linear space) over $\complexes$ and let $R$ be a bounded linear operator on $B$ (\emph{bounded} is synonymous with \emph{continuous} for linear operators on $B$).  The \emph{spectrum} of $R$, denoted $\sigma(R)$, is the set of complex numbers $\lambda$ such that $\lambda I - R$ is not invertible.  The \emph{spectral radius} of $R$ is
\begin{eqnarray}
\sup_{\lambda\in\sigma(R)}\len{\lambda} &=& \inf_n \sqrt[n]{\llen{R^n}},\label{eqn:specradius}
\end{eqnarray}
where
\(
\llen R &=& \sup_{\parallel x\parallel=1} \llen{R(x)}.
\)
Suppose that $I-R$ is invertible; that is, $1\not\in\sigma(R)$.  Let $a\in B$.  Then there is a unique solution $e\star$ of the equation $e = a + Re$, namely $e\star = (I-R)^{-1}a$.

\begin{theorem}
\label{thm:converge1}
Consider the affine operator $\tau(e) = a + Re$, where $R$ is a bounded linear operator with spectral radius strictly less than $1$.  Let $\phi\subs B$ be a closed nonempty region preserved by $\tau$.  Then $e\star\in\phi$.
\end{theorem}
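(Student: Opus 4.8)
The plan is to reduce the statement to the metric coinduction principle already established above for eventually contractive maps. A Banach space is complete under the metric $d(u,v) = \llen{u-v}$ induced by its norm, and the affine map $\tau$, having bounded (hence continuous) linear part $R$, is itself continuous. Moreover, a spectral radius strictly less than $1$ forces $1\not\in\sigma(R)$, so $I-R$ is invertible and $e\star=(I-R)^{-1}a$ is well-defined as the unique fixpoint of $\tau$. Thus, once I exhibit that $\tau$ is eventually contractive, the metric principle applies verbatim — $\phi$ being closed, nonempty, and preserved by $\tau$ — and immediately delivers $e\star\in\phi$. The entire task therefore collapses to verifying eventual contractivity.

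First I would extract a single contraction from the spectral hypothesis. By the spectral radius formula \eqref{eqn:specradius}, the assumption that the spectral radius is strictly less than $1$ says precisely that $\inf_n\sqrt[n]{\llen{R^n}}<1$. Hence there is some $n\geq 1$ with $\sqrt[n]{\llen{R^n}}<1$, that is, $\llen{R^n}<1$.

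Next I would compute the $n$-th iterate of $\tau$ explicitly. A routine induction gives
\[
\tau^n(e) \ =\ \Bigl(\sum_{k=0}^{n-1} R^k\Bigr)a \ +\ R^n e,
\]
so that for any $u,v\in B$ the affine term cancels and
\[
\tau^n(u) - \tau^n(v) \ =\ R^n(u - v).
\]
Taking norms and using that $R^n$ is bounded yields $d(\tau^n(u),\tau^n(v))\leq \llen{R^n}\cdot d(u,v)$. Since $\llen{R^n}<1$, the map $\tau^n$ is contractive with constant $\llen{R^n}$, so $\tau$ is eventually contractive, completing the reduction.

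The content of the argument lies entirely in the first step: translating the spectral-radius hypothesis into a norm bound $\llen{R^n}<1$ for some finite $n$. Everything afterward is mechanical — the geometric-sum form of $\tau^n$ and the cancellation of the constant term — and the genuinely analytic work (the Cauchy-sequence convergence and the appeal to closedness of $\phi$) has already been discharged once and for all in the metric coinduction principle. I do not anticipate a real obstacle; the only point needing a word of care is the standard observation that an infimum of the sequence $\sqrt[n]{\llen{R^n}}$ lying below $1$ forces an individual term below $1$, which is immediate.
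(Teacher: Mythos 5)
Your proposal is correct and follows essentially the same route as the paper's proof: extract $n$ with $\llen{R^n}<1$ from the spectral radius formula \eqref{eqn:specradius}, observe that the affine parts of $\tau^n$ cancel so that $\tau^n(u)-\tau^n(v)=R^n(u-v)$, conclude $\tau$ is eventually contractive, and invoke the metric coinduction principle \eqref{eqn:coind1}. Your explicit remarks that $1\not\in\sigma(R)$ (so $e\star=(I-R)^{-1}a$ exists) and that $\tau$ is continuous are points the paper handles in the surrounding text rather than inside the proof, but the substance is identical.
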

\begin{proof}
By \eqref{eqn:specradius}, if the spectral radius of $R$ is less than $1$, then $R$ is eventually contractive; that is, there exists $n$ such that $\llen{R^n} < 1$.  Then $\tau$ is also eventually contractive, since
\(
\llen{\tau^n(e) - \tau^n(e')} &=& \llen{\sum_{i=0}^{n-1}R^i(a) + R^n(e) - \sum_{i=0}^{n-1}R^i(a) - R^n(e')}\\
&=& \llen{R^n(e - e')}\\
&\leq& \llen{R^n}\cdot\llen{e - e'}.
\)
It follows from \eqref{eqn:coind1} that the unique fixpoint of $\tau^n$ is contained in $\phi$.  But this fixpoint must be $e\star$, since $e\star$ is a fixpoint of $\tau$. 
\end{proof}

Restated as a proof rule, Theorem \ref{thm:converge1} takes the following form:
\begin{theorem}
\label{thm:converge2}
Let $\tau$ be as in Theorem \ref{thm:converge1}.  Let $\phi$ be a closed property.  The following rule is valid:
\begin{eqnarray}
\frac{\exists e\ \phi(e)\qquad\forall e\ \phi(e) \Imp \phi(\tau(e))}{\phi(e\star)}.\label{eqn:rule1}
\end{eqnarray}
More generally, for any $n\geq 1$,
\begin{eqnarray}
\frac{\exists e\ \phi(e)\qquad\forall e\ \phi(e) \Imp \phi(\tau^n(e))}{\phi(e\star)}.\label{eqn:rule2}
\end{eqnarray}
\end{theorem}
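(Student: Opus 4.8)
The plan is to read both rules as direct restatements of Theorem~\ref{thm:converge1}, the only genuine work being to accommodate the iterate $\tau^n$ in \eqref{eqn:rule2}. Since \eqref{eqn:rule1} is exactly the case $n=1$ of \eqref{eqn:rule2}, I would concentrate on the general rule and recover the first as a special case at the end.

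First I would translate the premises of \eqref{eqn:rule2} into the language of Theorem~\ref{thm:converge1}. The premise $\exists e\ \phi(e)$ asserts precisely that the closed region $\phi$ is nonempty, and the premise $\forall e\ \phi(e)\Imp\phi(\tau^n(e))$ asserts precisely that $\phi$ is preserved by the map $\tau^n$. So the natural move is to apply the machinery of Theorem~\ref{thm:converge1} not to $\tau$ itself but to its $n$-fold iterate $\tau^n$, against the closed nonempty $\tau^n$-invariant region $\phi$.

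Next I would verify that $\tau^n$ is of the required form. As already computed in the proof of Theorem~\ref{thm:converge1}, $\tau^n(e)=\sum_{i=0}^{n-1}R^i(a)+R^n(e)$, so $\tau^n$ is again affine, with linear part $R^n$. It remains to confirm that $R^n$ has spectral radius strictly below $1$. Because the spectral radius of $R$ is strictly less than $1$, \eqref{eqn:specradius} supplies a $k$ with $\llen{R^k}<1$; then by submultiplicativity $\llen{(R^n)^k}=\llen{R^{nk}}=\llen{(R^k)^n}\leq\llen{R^k}^n<1$, and a second appeal to \eqref{eqn:specradius} gives spectral radius of $R^n$ strictly less than $1$. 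Thus Theorem~\ref{thm:converge1}, applied verbatim to $\tau^n$, places the unique fixpoint of $\tau^n$ inside $\phi$.

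Finally I would identify that fixpoint with $e\star$. Since $\tau(e\star)=e\star$ we have $\tau^n(e\star)=e\star$, so $e\star$ is a fixpoint of $\tau^n$; and as $\tau^n$ is eventually contractive its fixpoint is unique, so the fixpoint just shown to lie in $\phi$ is $e\star$ itself. This yields $\phi(e\star)$, and setting $n=1$ recovers \eqref{eqn:rule1}. I expect the only real obstacle to be the spectral-radius bookkeeping in the third step, namely checking that the strict inequality on the spectral radius survives the passage from $R$ to $R^n$; this reduces to the elementary norm estimate above, and everything else is a direct invocation of Theorem~\ref{thm:converge1}.
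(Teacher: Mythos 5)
Your proof is correct, but it takes a genuinely different route from the paper's. You keep the property $\phi$ fixed and change the operator: you verify that $\tau^n$ is again affine with linear part $R^n$, check via the norm estimate $\llen{R^{nk}}\leq\llen{R^k}^n<1$ and \eqref{eqn:specradius} that $R^n$ has spectral radius strictly less than $1$, and then invoke Theorem \ref{thm:converge1} for $\tau^n$, identifying its unique fixpoint with $e\star$ since $\tau(e\star)=e\star$ implies $\tau^n(e\star)=e\star$. The paper instead keeps the operator $\tau$ fixed and changes the property: it applies rule \eqref{eqn:rule1} to $\psi(e)=\bigvee_{i=0}^{n-1}\phi(\tau^i(e))$, which is closed because $\tau$ is continuous, nonempty because $\phi$ is, and preserved by $\tau$ (an index $i\geq 1$ shifts down to $i-1$; the index $i=0$ maps to $i=n-1$ using the premise), after which $\psi(e\star)$ collapses to $\phi(e\star)$ because $\tau^i(e\star)=e\star$. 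Your approach buys a black-box reuse of Theorem \ref{thm:converge1} at the price of the spectral-radius bookkeeping for $R^n$, which you handle correctly; the paper's approach avoids any spectral computation entirely, at the price of checking closure and preservation of the disjunctive property, and it incidentally illustrates a technique (modifying the coinduction hypothesis rather than the operator) that recurs in the paper's applications. Both arguments are sound, and your identification of the fixpoint of $\tau^n$ with $e\star$ is exactly the step that makes the operator-iteration route go through.
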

\begin{proof}
The rule \eqref{eqn:rule1} is just a restatement of Theorem \ref{thm:converge1}.  The rule \eqref{eqn:rule2} follows by applying \eqref{eqn:rule1} to the closed property $\psi(e) = \bigvee_{i=0}^{n-1} \phi(\tau^i(e))$.  This is a closed property because $\tau$ is continuous on $B$.
\end{proof}

For example, to show that $E_1\star\leq E_0\star$ using the rule \eqref{eqn:rule1}, we take $B$ to be the space of bounded real-valued functions on the unit interval, $a=\lambda x.1$, $R:B\imp B$ the bounded linear operator 
\begin{eqnarray}
R &=& \lambda E.\lambda q.r(q)E(f_1(q))\label{eqn:Rdef}
\end{eqnarray}
with spectral radius $1-p$, $\phi(E)$ the closed property
\(
\forall q\ \ E(q) &\leq& \frac qp + \frac{1-q}{1-p},
\)
and
\begin{eqnarray}
\tau(E) &=& \lambda q.(1+r(q)E(f_1(q)))\ \ =\ \ \lambda q.(1+RE(q)),\label{eqn:taudef}
\end{eqnarray}
where $f_1$ and $r$ are as given in \eqref{eqn:f1def} and \eqref{eqn:rdef}.  That the spectral radius of $R$ is at most $1-p$ follows immediately from \eqref{eqn:specradius}, since
\(
\llen R &=& \sup_{\parallel E\parallel=1} \sup_q \len{RE(q)}\ \ =\ \ \sup_{\parallel E\parallel=1} \sup_q \len{r(q)E(f_1(q))}\ \ \leq\ \ 1-p.
\)
That it is exactly $1-p$ requires a further argument, which we defer to Section \ref{sec:unbounded}.

Now the desired conclusion is
\begin{eqnarray}
\forall q\ \ E_1\star(q) &\leq& \frac qp + \frac{1-q}{1-p},\label{eqn:exm1}
\end{eqnarray}
and the two premises we must establish are
\begin{eqnarray}
\exists E\ \ \forall q\ E(q) &\leq& \frac qp + \frac{1-q}{1-p},\label{eqn:premise1}
\end{eqnarray}
\begin{eqnarray}
\forall E\ \ \left(\forall q\ E(q) \leq \frac qp + \frac{1-q}{1-p}\right.
&\Imp& \left.\forall q\ \tau(E)(q) \leq \frac qp + \frac{1-q}{1-p}\right).\label{eqn:premise2}
\end{eqnarray}
The premise \eqref{eqn:premise1} is trivial; for example, take $E = \lambda q.0$.  For \eqref{eqn:premise2}, let $E$ be arbitrary.  We wish to show that
\begin{eqnarray}
\forall q\ E(q) \leq \frac qp + \frac{1-q}{1-p}
&\Imp& \forall q\ \tau(E)(q) \leq \frac qp + \frac{1-q}{1-p}.\label{eqn:a2}
\end{eqnarray}
Picking $q$ arbitrarily on the right-hand side and then specializing the left-hand side at $f_1(q)$, it suffices to show
\begin{eqnarray}
E(f_1(q)) \leq \frac{f_1(q)}p + \frac{1-f_1(q)}{1-p}
&\Imp& \tau(E)(q) \leq \frac qp + \frac{1-q}{1-p}.\label{eqn:a5}
\end{eqnarray}
Substituting the definition of $\tau$, we need to show
\begin{eqnarray}
E(f_1(q)) \leq \frac{f_1(q)}p + \frac{1-f_1(q)}{1-p}
&\Imp& 1+r(q)E(f_1(q)) \leq \frac qp + \frac{1-q}{1-p}.\label{eqn:a6}
\end{eqnarray}
The proof breaks into three cases, depending on whether $q\leq p$, $p < q < 1-p$, or $q \geq 1-p$.  In the first case, $f_1(q)= q/p$ and $r(q)=p$.  Then \eqref{eqn:a6} becomes
\(
E(\frac qp) \leq \frac q{p^2} + \frac{1-\frac qp}{1-p} &\Imp& 1+pE(\frac qp) \leq \frac qp + \frac{1-q}{1-p}.
\)
But
\(
1+pE(\frac qp) &\leq& 1 + p(\frac q{p^2} + \frac{1-\frac qp}{1-p})\ \ =\ \ \frac qp + \frac{1-q}{1-p}.
\)
The remaining two cases are equally straightforward.  The last case, $q \geq 1-p$, uses the fact that $p\leq 1/2$.

\medskip

One can also prove closed properties of more than one function $E$.  For example, as promised, we can show that $E_3\star\leq E_1\star$ whenever
\(
\max((1-p)^2,1-(1-p)^2) &\leq& c\ \ \leq\ \ 1-p.
\)
For this application, $B$ is the space of pairs $(E,E')$, where $E$ and $E'$ are bounded real-valued functions on the unit interval, $a=(\lambda x.1,\lambda x.1)$, and $R:B\imp B$ is the bounded linear operator
\(
R(E,E') &=& (\lambda q.r(q)E(f_3(q)),\lambda q.r(q)E'(f_1(q)))
\)
with spectral radius $1-p$.  The closed property of interest is $E\leq E'$, but we need the stronger coinduction hypothesis
\begin{eqnarray}
\phi(E,E') &=& \parbox[l]{14pt}{$\forall q$} E(q)\leq E'(q)\label{eqn:IH1}\\
&& \hspace{14pt}\wedge\ \ E(q)\geq \frac 1{1-p}\label{eqn:IH2}\\
&& \hspace{14pt}\wedge\ \ p < q < 1-p\ \ \Imp\ \ E'(q)\geq 2\label{eqn:IH3}\\
&& \hspace{14pt}\wedge\ \ E'(q)\leq \frac qp + \frac{1-q}{1-p}\label{eqn:IH4}\\
&& \hspace{14pt}\wedge\ \ 0 \leq q \leq p\ \ \Imp\ \ E(q) = E(q+1-p).\label{eqn:IH5}
\end{eqnarray}
Equivalent to \eqref{eqn:IH5} is the statement
\begin{eqnarray}
1-p \leq q \leq 1 &\Imp& E(q) = E(q-(1-p)).\label{eqn:IH6}
\end{eqnarray}
There certainly exist $(E,E')$ satisfying $\phi$.  We have also already argued that coinduction hypothesis \eqref{eqn:IH4} is preserved by $\tau$.  The argument for \eqref{eqn:IH2} is similar.  For \eqref{eqn:IH5}, if $0 \leq q\leq p$, then
\(
1-p &\leq& q+1-p\ \ \leq\ \ 1,
\)
therefore
\(
r(q) &=& r(q+1-p)\ \ =\ \ p\\
f_3(q) &=& \frac qp\\
f_3(q+1-p) &=& \frac{(q+1-p)-(1-p)}{p}\ \ =\ \ \frac qp.
\)
It follows that
\(
1+r(q)E(f_3(q)) &=& 1+r(q+1-p)E(f_3(q+1-p))\ \ =\ \ 1+pE(q/p).
\)

For \eqref{eqn:IH3}, if $p < q < 1-p$, then
\(
r(q) &=& 1-p\\
E'(f_1(q)) &\geq& \frac 1{1-p}
\)
by the coinduction hypotheses \eqref{eqn:IH1} and \eqref{eqn:IH2}, thus
\(
1+r(q)E'(f_1(q)) &\geq& 1+(1-p)\frac 1{1-p}\ \ =\ \ 2.
\)

Finally, for \eqref{eqn:IH1}, we wish to show
\(
1+r(q)E(f_3(q)) &\leq& 1+r(q)E'(f_1(q)),
\)
or equivalently,
\begin{eqnarray}
E(f_3(q)) &\leq& E'(f_1(q)).\label{eqn:last}
\end{eqnarray}
Since $f_1$ and $f_3$ coincide except in the range $c < q < 1-p$, we need only show \eqref{eqn:last} for $q$ in this range.

It follows from the assumptions in effect that
\(
p &<& f_1(q)\ \ =\ \ \frac{q-p}{1-p}\ \ <\ \ 1-p\ \ <\ \ \frac q{1-p}\ \ =\ \ f_3(q),
\)
thus
\reasoning{\frac{\frac q{1-p} - (1-p)}{p} + \frac{1-(\frac q{1-p} - (1-p))}{1-p}}
\(
E(f_3(q))
&=& \reason{E(\frac q{1-p} - (1-p))}{by \eqref{eqn:IH5}, in the form \eqref{eqn:IH6}}\\
&\leq& \reason{\frac{\frac q{1-p} - (1-p)}{p} + \frac{1-(\frac q{1-p} - (1-p))}{1-p}}{by \eqref{eqn:IH4}}\\
&=& (\frac q{1-p} - 1)\frac{1-2p}{p(1-p)} + 2\\
&\leq& \reason{2}{since $p,q\leq 1-p$}\\
&\leq& \reason{E'(f_1(q))}{by \eqref{eqn:IH3}.}
\)

We can conclude from the coinduction rule that $\phi(E_3\star,E_1\star)$.  Note that nowhere in this proof did we use any analytic arguments.  All the necessary analysis is encapsulated in the proof of Theorem \ref{thm:converge1}.

\medskip

As a final application, we show how to use the coinductive proof rule \eqref{eqn:rule1} of Theorem \ref{thm:converge2} to argue that for $p < 1/2$, the function $E_1\star$ is nowhere differentiable.  We do this by showing that $E_1\star$ has a dense set of discontinuities on the unit interval.

First we show that $E_1\star$ has discontinuities at $p$ and $1-p$.  We know from clause \eqref{eqn:IH3} of the previous argument that for all $q$ in the range $p < q < 1-p$,
\begin{eqnarray}
E_1\star(q) &\geq& 2.\label{eqn:discont1}
\end{eqnarray}
Also, by \eqref{eqn:exm1}, we have that $E_1\star(q) \leq 1/p$ for all $q$.  Then for $\eps < p^2$, unwinding the defining recurrence \eqref{eqn:fractal1} for $E_1\star$ twice yields
\begin{eqnarray}
E_1\star(1-p+\eps) &=& 1+p+p^2E_1\star(\frac{\eps}{p^2})\ \ \leq\ \ 1+2p\label{eqn:discont2}\\
E_1\star(p-\eps) &=& 1+p+p^2E_1\star(1-\frac{\eps}{p^2})\ \ \leq\ \ 1+2p.\label{eqn:discont3}
\end{eqnarray}
Since $1 + 2p < 2$, \eqref{eqn:discont1}--\eqref{eqn:discont3} imply that $E_1\star$ has discontinuities at $p$ and $1-p$.

Finally, we show that every nonempty open interval contains a discontinuity.  Suppose for a contradiction that $E_1\star$ is continuous on a nonempty open interval $(a,b)$.  The interval $(a,b)$ can contain neither $p$ nor $1-p$, so the entire interval must be contained in one of the three regions $(0,p)$, $(p,1-p)$, or $(1-p,1)$.

Suppose it is contained in $(0,p)$.  Then
\(
E_1\star(q) &=& 1 + pE_1\star(q/p)
\)
for $a < q < b$, thus
\(
E_1\star(q/p) &=& (E_1\star(q) - 1)/p
\)
for $a/p < q/p < b/p$, so $E_1\star$ is also continuous on the interval $(a/p,b/p)$.  But the length of this interval is $(b-a)/p$, thus we have produced a longer interval on which $E_1\star$ is continuous.

A similar argument holds if $(a,b)$ is contained in one of the intervals $(p,1-p)$ or $(1-p,1)$.  In each of these three cases, we can produce an interval of continuity that is longer than $(a,b)$ by a factor of at least $1/(1-p)$.  This process can be repeated at most $\log (b-a)/\log (1-p)$ steps before the interval must contain one of the discontinuities $p$ or $1-p$.  This is a contradiction.

\section{Unbounded Solutions}
\label{sec:unbounded}

That these coinductive proofs have no basis is reflected in the fact that there exist unbounded solutions in addition to the unique bounded solutions.  All unbounded solutions are necessarily noncontinuous, because any continuous solution on a closed interval is bounded.

Theorem \ref{thm:converge1} does not mention these unbounded solutions, because they live outside the Banach space $B$.  Nevertheless, it is possible to construct unbounded solutions to any of the above recurrences.  All these recurrences are of the form
\begin{eqnarray}
E(q) &=& a + r(q)E(f(q)).\label{eqn:gen}
\end{eqnarray}
Let $G$ be the graph with vertices $q\in\bx{0,1}$ and edges $(q,f(q))$.  Note that every vertex in $G$ has outdegree 1.  Let $C$ be an undirected connected component of $G$.  One can show easily that the following are equivalent:
\begin{enumerate}
\romanize
\item
$C$ contains an undirected cycle;
\item
$C$ contains a directed cycle;
\item
for some $q\in C$ and $k>0$, $f^k(q)=q$.
\end{enumerate}
Call $C$ \emph{rational} if these conditions hold of $C$, \emph{irrational} otherwise.  For example, for $f_1$ given in \eqref{eqn:f1def}, the connected components of $0$ and $1$ are rational, since $f_1(0)=0$ and $f_1(1)=1$.  There are other rational components besides these; for example, if $p=1/4$, the component of $q=11/20$ is rational, since $f_1^2(11/20)=f_1^4(11/20)=1/5$.

Now any solution $E$ of \eqref{eqn:gen} must agree with the unique bounded solution $E\star$ on the rational components: if $f^k(q)=q$, then unwinding the recurrence $k$ times gives
\(
E(q) &=& a\sum_{n=0}^{k-1}\prod_{i=0}^{n-1}r(f^i(q)) + \left(\prod_{i=0}^{k-1}r(f^i(q))\right)E(q),
\)
therefore
\(
E(q) &=& \frac{a\sum_{n=0}^{k-1}\prod_{i=0}^{n-1}r(f^i(q))}{1-\prod_{i=0}^{k-1}r(f^i(q))}.
\)
But the values of $E$ on an entire connected component are uniquely determined by its value on a single element of the component, since $E(q)$ uniquely determines $E(f(q))$ and vice-versa.  Thus $E$ and $E\star$ must agree on the entire component.

We note in passing that this allows us to construct an $E$ such that $RE = (1-p)E$, where $R$ is the linear operator of \eqref{eqn:Rdef}, thereby establishing that the spectral radius of $R$ is $1-p$.  Take $E(1)=1$, then inductively define $E(q) = r(q)E(f_1(q))/(1-p)$ for all other $q$ in the component of 1 and $E(q) = 0$ otherwise.  Then $\llen E=1$, and
\(
RE(q) &=& r(q)E(f_1(q))\ \ =\ \ r(q)E(q)\cdot\frac{1-p}{r(q)}\ \ =\ \ (1-p)E(q).
\)

For an irrational component, since there are no cycles, it is connected as a tree.  We can freely assign an arbitrary value to an arbitrarily chosen element $q$ of the component, then extend the function to the entire component uniquely and without conflict.

For $f\in\{f_1,f_2,f_3\}$ of the examples of Section \ref{sec:example}, there always exists an irrational component.  This follows from the fact that if $f^k(q)=q$, then $q$ is a rational function of $p$; that is, $q$ is an element of the field $\rationals(p)$.  To see this, note that any $f^k(q)$ is of the form
\(
\frac q{p^m(1-p)^{k-m}}\ -\ r
\)
for some $0\leq m\leq k$ and $r\in\rationals(p)$.  This can be shown by induction on $k$.  Solving $f^k(q)=q$ for $q$ gives
\(
q &=& \frac{rp^m(1-p)^{k-m}}{1 - p^m(1-p)^{k-m}}\ \ \in\ \ \rationals(p).
\)
Thus the component of any real $q\not\in\rationals(p)$ is an irrational component.  There exist uncountably many such $q$, since $\rationals(p)$ is countable.  In fact, there are uncountably many irrational components, since each component is countable, and a countable union of countable sets is countable.  Moreover, it can be shown that if $q_1$ and $q_2$ are in the same component, then $\rationals(p,q_1)=\rationals(p,q_2)$.  This is because if $q_1$ and $q_2$ are in the same component, then $f^{k_1}(q_1)=f^{k_2}(q_2)$ for some $k_1,k_2\in\naturals$, so
\(
\frac{q_1}{p^{m_1}(1-p)^{k_1-m_1}}\ -\ r_1 &=& \frac{q_2}{p^{m_2}(1-p)^{k_2-m_2}}\ -\ r_2,
\)
therefore $q_1\in\rationals(p,q_2)$ and $q_2\in\rationals(p,q_1)$.

We have thus characterized all possible solutions.

\section{Why Is This Coinduction?}
\label{sec:why}

The reader may be curious why we have called the rule \eqref{eqn:rule1} a coinduction rule, since it may seem different from the usual forms of coinduction found in the literature.  The form of the rule and its use in applications certainly bears a resemblance to other versions in the literature, but to justify the terminology on formal grounds, we must exhibit a category of coalgebras and show that the rule \eqref{eqn:rule1} is equivalent to the assertion that a certain coalgebra is final in the category.  

Say we have a contractive map $\tau$ on a metric space $B$ and a nonempty closed subset $\phi\subs B$ preserved by $\tau$.  Define $\tau(\phi)=\set{\tau(s)}{s\in\phi}$.  Consider the category $C$ whose objects are the nonempty closed subsets of $B$ and whose arrows are the reverse set inclusions; thus there is a unique arrow $\phi_1\imp\phi_2$ iff $\phi_1\supseteq\phi_2$.  The map $\bar\tau$ defined by $\bar\tau(\phi)=\cl(\tau(\phi))$, where $\cl$ denotes closure in the metric topology, is an endofunctor on $C$, since $\bar\tau(\phi)$ is a nonempty closed set, and $\phi_1\supseteq\phi_2$ implies $\bar\tau(\phi_1)\supseteq \bar\tau(\phi_2)$.  A $\bar\tau$-coalgebra is then a nonempty closed set $\phi$ such that $\phi\supseteq \bar\tau(\phi)$; equivalently, such that $\phi\supseteq \tau(\phi)$.  The final coalgebra is $\{e\star\}$, where $e\star$ is the unique fixpoint of $\tau$.  The coinduction rule \eqref{eqn:rule1} says that $\phi\supseteq \tau(\phi)\Imp \phi\supseteq\{e\star\}$, which is equivalent to the statement that $\{e\star\}$ is final in the category of $\bar\tau$-coalgebras.

\section{Future Work}
\label{sec:future}

There is great potential in the use of proof principles similar to those of Theorem \ref{thm:converge2} for simplifying arguments involving probabilistic programs, stochastic processes, and dynamical systems.  Such rules encapsulate low-level analytic arguments, thereby allowing reasoning about such processes at a higher algebraic or logical level.  A few such applications have been described in the theory of streams, Markov chains and Markov decision processes, and non-well-founded sets \cite{KR07a}.  Other possible application areas are complex and functional analysis, the theory of linear operators, measure theory and integration, random walks, fractal analysis, functional programming, and probabilistic logic and semantics.

In particular, probabilistic programs can be modeled as measurable kernels $R(x,A)$, which can be interpreted as forward-moving measure transformers or backward-moving measurable function transformers \cite{DesGupJagPan99,K81c}. The expectation functions considered in this paper were uniformly bounded, but there are examples of probabilistic programs for which this is not true. It would be nice to find rules to handle these cases.

An intriguing open problem is whether the optimal strategy for the coin-flip process of Section \ref{sec:example} is decidable.  Specifically, given rational $p,q$, $0 < p,q < 1$, and a flip of the bias-$p$ coin, can we decide what action to take to minimize the expected running time?  It is known that $E_3\star$ is not necessarily optimal.

\section*{Acknowledgements}

Thanks to Terese Damh\o j Andersen, Lars Backstrom, Juris Hartmanis, Geoff Kozen, Prakash Panangaden, Jan Rutten, and the anonymous referees.  This work was supported in part by ONR Grant N00014-01-1-0968 and by NSF grant CCF-0635028.  The views and conclusions herein are those of the author and do not necessarily represent the official policies or endorsements of these organizations or the US government.

\bibliographystyle{plain}

\end{document}